\renewcommand{\paragraph}[1]{\smallskip \noindent {\textsc{#1}}}
\def\@copyrightspace{\relax}
\begin{document}
\title{Stochastic Coupon Probing in Social Networks}
{\author{Shaojie Tang}
\affiliation{%
  \institution{University of Texas at Dallas}}

\begin{abstract}
In this paper, we study stochastic coupon probing problem in social networks.  Assume there is a social network and  a set of  coupons.  We can offer coupons to some users \emph{adaptively} and those users who accept the offer will act as \emph{seeds} and influence their friends in the social network.  There are two constraints which are called the \emph{inner} and \emph{outer} constraints, respectively. The set of coupons redeemed by users  must satisfy inner constraints, and the set of all probed users must satisfy outer constraints. One seeks to develop a coupon probing policy that achieves the maximum influence while satisfying both inner and outer constraints. Our main result is a constant approximation policy for the stochastic coupon probing problem for any monotone submodular utility function. 
\end{abstract}

%
%
%


\maketitle

\section{Introduction}
\label{sec:intro}
Social media marketing has been recognized as  one of the most effective marketing methods which can increase marketers' business' visibility with little cost. As reported in Social Media Examiner, around 96\% of marketers are currently participating in social media marketing. Different from traditional online marketing strategy whose focus
is to find the best match between a given product and an individual customer, social media marketing is more concerned about the network value of a customer. For example, giving some influential customers an incentive (such as coupon as studied in this work) to share your product information with their social circle could increase your brand recognition rapidly as they are engaging with a broad audience of customers.

To this end, we introduce and study the stochastic coupon probing problem in social networks. Assume there is a social network and  a set of  coupons.  We can offer coupons to some users and those users who accept the offer will act as \emph{seed nodes} and influence their friends in the social network.  However, any feasible solution must satisfy two constraints which are called the \emph{inner} and \emph{outer} constraints, respectively. The set of coupons redeemed by users  must satisfy inner constraints, and the set of all probed users must satisfy outer constraints. In particular, we consider the following constraints in our basic model:
(Outer Constraint) The number of probes involving the same user can not exceed a non-negative constant, and (Inner Constraint) the total value of redeemed coupons can not exceed a non-negative number. Our ultimate goal is to develop a coupon probing policy that achieves the maximum influence subject to both inner and outer constraints.

We notice that most of existing studies on coupon allocation assume a ``budgeted and non-adaptive '' setting \cite{yangcontinuous}, where there
is a predefined budget for selecting seed nodes, their solution is to commit the set of initial users and corresponding coupons all at once in advance. 
Our model is largely different from all existing work. We consider a stochastic optimization problem: Rather than make a one-shot decision at the very beginning, we can probe users one by one using some carefully selected coupons, and the decision made at each round is depending on the realization from previously probed users.
 In particular, if a user accepts our coupon, then we add that user to the solution and deduct corresponding coupon value from our budget, otherwise, the budget remains the same. Since the remaining budget after each round depends on the realization from earlier rounds, our decision made in each round is also dependent on the choices in earlier rounds.

Based on the above discussion, our problem falls belong to the category of stochastic optimization. Although there has been extensive work on adaptive/stochatic submodular optimization  \cite{golovin2011adaptive}\cite{badanidiyuru2016locally}\cite{adamczyk2016submodular}, our stochastic coupon probing model is substantially different from these problems: first of all, existing studies mainly assume that the cost of the action is fixed and pre-known, however, this assumption is clearly not true under our model, i.e, the actual cost of offering some coupon to a user depends on whether or not she accepts the offer; secondly, actions may incur non-uniform cost in our model, i.e., coupons have different values; thirdly, the realization of different actions are not independent, i.e., any rational user will not accept a low value coupon while rejecting a high value coupon; lastly,  our model involves a broader set of constraints including outer matroid constraint and inner knapsack constraint. Unfortunately, none of the existing solutions in the literature explicitly take the above three constraints into account. 

To the best of our knowledge, we are the first to systematically study the stochastic coupon probing problem in social networks. We present a novel probing policy that achieves a constant approximation ratio. This research contributes fundamentally to the development of approximate probing policies for any problems that fall into the family of stochastic optimization problems with correlated realizations subject to  outer matroid constraint and inner knapsack constraint. 

The rest of this paper is organized as follows: We review
related work in Section \ref{sec:review} and introduce our problem formulation in Section \ref{sec:pre}. We develop a novel coupon probing policy in Section \ref{sec:adaptive}. In Section \ref{sec:ext}, we extend the basic model to incorporate more constraints. 
We conclude this paper in Section \ref{sec:conclusion}. Some important notations are listed in Table \ref{symbol}.

\begin{table}[t]\centering
\caption{Symbol table.}
\begin{tabular}{{|c|l|}}
\hline
\textbf{Notation} & \textbf{Meaning}\\
\hline
\hline
$\mathcal{C}$ & the ground set that contains all coupons\\
\hline
$\mathcal{C}_l$ & a list of all low-value coupons\\
\hline
$I(\mathcal{U})$ & the expected cascade of $\mathcal{U}$\\
\hline
$\mathbf{y}$ & a decision matrix\\
\hline
$f^+(\mathbf{y})$ & the concave extension of $f$\\
\hline
$p_{v,c}$ & the attractiveness of $c$ to $v$\\
\hline
$(v,c)$ & an action that offers $c$ to $v$\\
\hline
$(v,\psi)$ &  an action that offers $\psi$ to $v$ sequentially\\
\hline
$\phi((v,c))$ & the realization of action $(v, c)$ \\
\hline
$b_{v\psi}$ & the expected cost of action $(v, \psi)$\\
\hline
$\Psi$ & subsequences from $\mathcal{C}_l$ with size at most $K$ \\
\hline
$\mathcal{S} \triangleq \mathcal{V}\times \Psi$ & the action space under \textbf{P.B}\\
\hline
\end{tabular}
\label{symbol}
\end{table}

\section{Related Work}
\label{sec:review}
This work is closely related to two topics: influence maximization (IM) and stochastic submodular optimization. IM has been extensively studied in the literature \cite{kempe2003maximizing,chen2013information,leskovec2007cost,cohen2014sketch,chalermsook2015social}, their objective is to find a set of influential customers so as to maximize the expected cascade. However, our work differ from all existing studies in several major aspects. Traditional IM assumes any node is guaranteed to be activated once it is selected, we relax this assumption by allowing users to response differently to different coupon values.  Recently, \cite{yangcontinuous}\cite{yuan2017adaptive} study discount allocation problem in social networks. However, \cite{yangcontinuous} assumes a budgeted and non-adaptive setting where decisions must be made all at once in advance, while  \cite{yuan2017adaptive} simply ignores the outer constraint, e.g., their model allows the company to probe all users at zero cost, we formulate our problem as a stochastic optimization problem subject to both inner and other constraints.

Another closely related topic to ours is adaptive/stochatic submodular optimization  \cite{golovin2011adaptive}\cite{seeman2013adaptive}\cite{badanidiyuru2016locally}\cite{adamczyk2016submodular}. Existing approaches can not apply directly to our setting due to the following reasons:  (1) the actual cost of offering some coupon to a user is stochastic rather than a fixed value, (2) coupons have different values, (3)  the realization of different actions are not independent, and (4) our problem involves a broader set of constraints including outer matroid constraint and inner knapsack constraint. In summary, we are the first to study the stochastic coupon probing problem with correlated realizations subject to outer matroid constraint and inner knapsack constraint. We propose a novel probing policy that provides the first bounded approximate solutions to this problem. 

\section{Preliminaries and Problem Formulation}
\label{sec:prel}
\subsection{Submodular Function and Its Concave Extension}
\label{sec:pre}
Consider a set function $h:2^{\Omega}\rightarrow \mathbb{R}$, where $2^{\Omega } $ denotes the power set of  $\Omega$. 
We say $h$ is submodular if and only if it satisfies he follow property: For every $X, Y \subseteq \Omega$ with $X \subseteq Y$ and every $x \in \Omega \backslash Y$, we have that $h(X\cup \{x\})-h(X)\geq h(Y\cup \{x\})-h(Y)$.

Given any vector $\bold{x}=\{x_1,x_2,\ldots,x_n\}$ such that each $0\leq x_i\leq 1$. 
The concave extension of $h$ is defined as
 \begin{equation*}
h^+(\mathbf{x})=\max \left\{\sum_{X\subseteq \Omega} \alpha_X f(X) \middle\vert \begin{array}{l}
    \alpha_X \geq 0, \sum_{X\subseteq \Omega}\alpha_X \leq1;\\
       \sum_{X}\alpha_X\mathbf{1}_X \leq \mathbf{x}
  \end{array}\right\}\end{equation*}

\subsection{Models}
\label{sec:models}
\subsubsection{Coupon Adoption Model and Influence Function:} Consider a set of users $\mathcal{V}$ and a set of  coupon values $\mathcal{C}$, for every coupon $c\in \mathcal{C}$  and every user $v \in \mathcal{V}$, we define the attractiveness of  $c$ to $v$ as $p_{v,c}\in [0,1]$. For simplicity of presentation, we directly use $c$ to represent the value of $c$. We assume that users are rational, meaning that they favor coupons with larger value, e.g.,  $\forall v\in \mathcal{V}: p_{v,c}\leq p_{v,c'}$ if $c\leq c'$.

We next describe the coupon adoption model used in this work. At the beginning, every user $v\in \mathcal{V}$ selects a random number $\sigma_v \in [0,1]$, which is called \emph{threshold} of $v$. And $v$ accepts any coupon with attractiveness larger than or equal to $\sigma_v$.  Once a user accepts a coupon, she becomes a seed node of our product and starts to influence other users in the social network. We denote the expected influence of a set of seed nodes $\mathcal{U}$ as $I(\mathcal{U})$, which is a monotone and submodular function of $\mathcal{U}$, following the seminal work on influence maximization \cite{kempe2003maximizing}. It is easy to verify that $p_{v,c}$ is equivalent to the adoption probability as defined in \cite{yangcontinuous}, e.g., the probability that $v$ accepts $c$ is $p_{v,c}$. 

\subsubsection{Action Space and Realizations}  We consider an action space $\mathcal{H} \triangleq \mathcal{V}\times \mathcal{C}$, where  an \emph{action} $(v, c) \in \mathcal{H} $ represents offering coupon $c\in \mathcal{C}$ to user $v\in \mathcal{V}$. Define $\phi((v,c))\in\{0,1\}$ to be the realization of action $( v, c)\in \mathcal{H}$:

\[\phi((v,c)) = \begin{cases}
1 & \mbox{if $v$ accepts $c$  and becomes the seed ($p_{v,c}\geq \sigma_v$)}\\
0 & \mbox{if $v$ rejectes $c$ ($p_{v,c}< \sigma_v$)}
\end{cases}\]
We say $(v,c)$ is \emph{active} (resp. \emph{inactive}) if $\phi((v,c))=1$ (resp. $\phi((v,c))=0$). We can observe the realization of $\phi((v, c))$ by probing $v$ using $c$. According to our coupon adoption model,  each action $(v, c)$ is active with probability $p_{v,c}$. However, the states of different actions are not independent, e.g., if $(v, c)$ is active, then every $(v, c')$ with $c'\geq c$ must be active (any rational user will not reject a coupon with higher value). 

\subsubsection{Probing Policy} Following the framework of \cite{golovin2011adaptive}, we characterize a probing policy $\pi$  as a function from realizations observed so far to $\mathcal{H}$, specifying which action to pick next based on our current knowledge.
 \[\pi: \Phi\rightarrow \mathcal{H},\]
 where $\Phi$ denotes the collection of realizations observed so far. By following a particular policy, we sequentially pick an action $(v, c)$ and observes its state, then decides the next action to pick, and so on. If $v$ accepts $c$ ($\phi((v,c)) =1$), we add $v$ to the set of seeds and pay $c$, otherwise ($\phi((v,c)) =0$), we pay nothing and move to the next round.

Given a probing policy $\pi$, let $\Pr[\mathcal{U}|\pi]$ denote the probability that $\mathcal{U}$ becomes the set of seed nodes by following $\pi$, then the expected utility of $\pi$ can be represented as
\[f(\pi)= \sum_{\mathcal{U}\subseteq \mathcal{V}} \Pr[\mathcal{U}|\pi] I(\mathcal{U})\]

\subsection{Problem Formulation}
We consider the setting in which we are given constraints on both users probed and the coupons redeemed by all probed users. In particular, the set of coupons redeemed by users  must satisfy inner constraint, and the set of all probed users must satisfy outer constraint. In the basic model, we consider the following two constraints.
\begin{itemize}
\item \underline{Outer Constraints:} The number of probes involving the same user can not exceed a non-negative constant  $K$. This constraint models the fact that  pushing too many promotions  to the same user \cite{tang2016optimizing} could impede her user experience.
\item \underline{Inner Constraints:} The total value of redeemed coupons can not exceed a non-negative number $B$.
\end{itemize}
We present the formal definition of our problem in \textbf{P.A}. Our ultimate goal is to identify the best probing policy subject to both inner and outer constraints.
 \begin{center}
\framebox[0.4\textwidth][c]{
\enspace
\begin{minipage}[t]{0.4\textwidth}
\small
$\textbf{P.A}$ $\max_{\pi} f(\pi)$\\
\textbf{subject to:}
\emph{inner constraint} and \emph{outer constraint};
\end{minipage}
}
\end{center}
\vspace{0.1in}
We develop a constant approximation coupon probing policy in Section \ref{sec:adaptive}. Later in Section \ref{sec:ext}, we extend the basic model to incorporate more constraints.

\textbf{ A Toy Example.} For better understanding of our model, we next go through a toy example. In this example, we assume $\mathcal{C}=\{1, 2\}$; $\mathcal{V}=\{a,b,c,d,e\}$; $K=1$, e.g., the same user can be probed at most once; $B=3$. Consider a given probing policy $\pi$,

 In the first round, assume $\pi(\emptyset)=(d, 1)$, we offer coupon $1$ to user $d$, e.g., pick action $(d, 1)$, and observe that  $d$ rejects $1$. The remaining budget is $2$ and the realization is  $\Phi_1=\{\phi(d, 1)=0\}$.

 In the second round, assume $\pi(\Phi_1)=(a, 2)$, we offer $2$ to $a$, e.g., pick action $(a, 2)$, and observe that $a$ accepts $2$. The remaining budget is $1$ and the realization is $\Phi_2=\{\phi(d, 1)=0, \phi(a, 2)=1\}$.

In the third round,  assume $\pi(\Phi_2)=(b, 1)$, we offer $1$ to $b$, e.g., pick action $(b, 1)$, and observe that $b$ accepts $1$. The remaining budget is $0$ and the realization is $\Phi_3=\{\phi(d, 1)=0, \phi(a, 2)=1, \phi(b, 1)=1\}$.

At the end, $\mathcal{U}=\{a, b\}$ is returned as the seed set and the utility is $I(\{a,b\})$.

\section{Stochastic Coupon Probing Policy}
\label{sec:adaptive}
\textbf{Overview.} In this paragraph, we present an overview of our algorithm design.  We first construct two candidate algorithms, say \verb"ALG I" and \verb"ALG II", and then randomly pick one algorithm with equal probability as the final solution. Our approach is inspired by the enumeration trick used to solve knapsack problem: we partition all coupons into two groups according to their values, then apply   \verb"ALG I" (resp. \verb"ALG II") to find a near optimal solution using only ``low-value'' (resp. ``high-value'') coupons. Later we show that the better solution returned from the above two algorithms achieves at least a constant approximation ratio.

\subsection{ALG I}
Before introducing the design of \verb"ALG I", we first present a restricted version of $\mathbf{P.A}$.
Notice that in $\mathbf{P.A}$, there is no restriction on the sequence of probed users and coupons, meaning that we can probe one user first and then come back to probe that user again, using a different coupon. To facilitate  our study, we next introduce a restricted version of $\mathbf{P.A}$ by adding one additional outer constraint: the same user can only be probed in consecutive rounds. The restricted outer constraints is presented as follows:
\begin{itemize}
\item \underline{Restricted Outer Constraints:} (a) the number of probes involving the same user can not exceed a non-negative constant $K$; (b) the same user can only be probed in consecutive rounds.
\end{itemize}

We present the formal definition of the restricted problem in \textbf{P.B}.

 \begin{center}
\framebox[0.48\textwidth][c]{
\enspace
\begin{minipage}[t]{0.45\textwidth}
\small
$\textbf{P.B}$ $\max_{\pi} f(\pi)$\\
\textbf{subject to:}
\emph{restricted outer constraints} and \emph{inner constraints};
\end{minipage}
}
\end{center}
\vspace{0.1in}

Perhaps surprisingly, later we show that restricting ourselves to probe the same user in consecutive rounds does not sacrifice much in terms of the utility. This enables us to focus on solving  $\textbf{P.B}$ in the rest of this paper.

Since we can only probe the same user in consecutive rounds, a valid probing policy on any individual user $v$ can be characterized as a sorted list of coupons  $\psi\subseteq \mathcal{C}$ with size at most $K$, specifying the sequence of coupons we offer to $v$ sequentially. Assume we decide to probe $v$ using $\psi$, the probing process that involves $v$ can be roughly described as follows: We follow  $\psi$ to offer a coupon $c\in \psi$ to $v$ one by one. If $v$ accepts $c$, we deduct $c$ from the budget and stop. Otherwise, if $v$ turns down $c$, we simply move to the next coupon in $\psi$, the budget of the next round remains unchanged. This process iterates until either $v$ accepts the current coupon or $v$ rejects the entire $\psi$.

Let $\mathcal{C}_l=\{c\in \mathcal{C}| c\leq B/2\}$ be the low-value coupons in $\mathcal{C}$. The input of \verb"ALG I" is $\mathcal{C}_l$, restricting ourselves to use low-value coupons only. Notice that there is no point offering a larger coupon to a user before offering her a smaller coupon.  Therefore, we sort $\mathcal{C}_l$  in non-decreasing order of their values, e.g., $\forall i\leq j, c_i\leq c_j$.
Define \[\Psi\triangleq \{\psi| \psi \prec \mathcal{C}_l, |\psi|\leq K\},\] where $\psi \prec \mathcal{C}_l$ represents that $\psi$ is a subsequence of $\mathcal{C}_l$. Since $K$ is a constant, the size of $\Psi$ is  polynomial in the size of $\mathcal{C}_l$. We next define the action space $\mathcal{S}$ under $\textbf{P.B}$: \[\mathcal{S} \triangleq \mathcal{V}\times \Psi\] Picking an action  $(v,\psi) \in \mathcal{S} $ translates to probing user $v$ using sequence $\psi$. 

Now we are ready to describe \verb"ALG I". 
The general idea is to first find a fractional solution with a bounded approximation ratio and then round it to an integral solution.

We present \textbf{P.B-r}, a relaxed version of \textbf{P.B}, as follows.
 \begin{center}
\framebox[0.45\textwidth][c]{
\enspace
\begin{minipage}[t]{0.45\textwidth}
\small
\textbf{P.B-r:}
\emph{Maximize $f^+(\mathbf{y})$}\\
\textbf{subject to:}
\begin{equation*}
\begin{cases}
\forall v\in \mathcal{V}: \sum_{\psi\in \Psi} y_{v\psi}\leq1 \quad(C1.1)\\
  \sum_{(v,\psi)\in \mathcal{S}} y_{v\psi}b_{v\psi}\leq B \quad(C2.1)\\
\forall (v,\psi)\in \mathcal{S}: y_{v\psi} \in[0,1] \quad (C3.1)
\end{cases}
\end{equation*}
\end{minipage}
}
\end{center}
\vspace{0.1in}
In the above formulation,
\begin{itemize}
\item $\mathbf{y}$ is a $|\mathcal{V}|\times |\Psi|$  decision matrix.
\item $f^+(\mathbf{y})$ is a concave extension of $f$:
\begin{equation}
\max \left\{\sum_{(v,\psi)\in \mathcal{S}} \alpha_{(v,\psi)} f((v,\psi)) \middle\vert \begin{array}{l}
    \alpha_{(v,\psi)} \geq 0;\\
      \sum_{(v,\psi)\in \mathcal{S}}\alpha_{(v,\psi)}\leq1;\\
      \forall v: \sum_{(v,\psi) \ni [v\psi] }\alpha_{(v,\psi)} \leq y_{v\psi}
  \end{array}\right\}
  \label{eq:1}
\end{equation}
\item $b_{v\psi}$ is the expected cost of action $(v, \psi)$:
\[b_{v\psi}=\sum_{c\in \psi}(\prod_{c': c' \leq c} (1-p_{u, c'} )p_{v, c} c )\]
\end{itemize}
\begin{algorithm}[h]
{\small
\caption{Continuous Greedy}
\label{alg:greedy-peak}
\begin{algorithmic}[1]
\STATE Set $\delta=1/(|\mathcal{V}|\times |\Psi|)^2, t=0, f(\emptyset)=0$.
\WHILE{$t<1$}
\STATE Let $R(t)$ be a random set which contains each $(v,\psi)$ independent with probability $y_{v\psi}(t)$.
\STATE For each $(v,\psi)\in \mathcal{S}$, estimate
\[\omega_{v\psi}=\mathbb{E}[f(\mathcal{R}(t)\cup\{(v,\psi)\})]-\mathbb{E}[[f(\mathcal{R}(t))]\]
\STATE Solve the following liner programming problem and obtain the optimal solution $\mathbf{y}^*$
\STATE
\begin{center}
\framebox[0.42\textwidth][c]{
\enspace
\begin{minipage}[t]{0.42\textwidth}
\small
\textbf{P.C:}
\emph{Maximize $\sum_{[v\psi]\in \mathcal{S}}\omega_{v\psi}y_{v\psi}$ }\\
\textbf{subject to:}\begin{equation*}
\begin{cases}
(C1.1) \mbox{ and } (C3.1) \\
  \sum_{(v,\psi)\in \mathcal{S}} y_{v\psi}b_{v\psi}\leq \beta B
\end{cases}
\end{equation*}
\end{minipage}
}
\end{center}
\vspace{0.1in}
\STATE Let $y^g_{v\psi}(t+\delta)=y^g_{v\psi}(t)+\delta{y}^*_{v\psi}$; \label{line:1}
\STATE Increment $t=t+\delta$;
\ENDWHILE
\end{algorithmic}
}
\end{algorithm}

We use the continuous greedy algorithm (Algorithm \ref{alg:greedy-peak}) to solve \textbf{P.B-r} and obtain an fractional solution $\mathbf{y}^{g}$. This algorithm is first proposed in \cite{vondrak2008optimal} in the context of offline submodular
function maximization. Notice that  condition (2.1) is replaced by  $\sum_{(v,\psi)\in \mathcal{S}} y_{v\psi}b_{v\psi}\leq \beta B$ in \textbf{P.C}, where $0\leq\beta\leq1/2$ is a tuning parameter, e.g., the budget $K$ is scaled down by a factor of $\beta$.

Given $\mathbf{y}^g$, we next introduce the design of \verb"ALG I".

\emph{Step 1:}  Let $R$ be a random set obtained by including each
element $(v, \psi)$ independently with probability ${y}^g_{v\psi}$.

\emph{Step 2:} Apply Contention resolution \cite{vondrak2011submodular} to $\mathcal{R}$ to obtain a new set $\mathcal{Z}$ such that $|\mathcal{Z}\cap \mathcal{S}^v|\leq1$ where $\mathcal{S}^v$ contains all actions from $\mathcal{S}$ that involves $v$.

\emph{Step 3:} Consider actions from $\mathcal{Z}$ one by one in an arbitrary order, let the current action, say $(v, \psi)$, survive if the current budget is no smaller than $B/2$, else we discard it.
 \begin{itemize}
 \item If $(v,\psi)$ survives, probe $v$ using $\psi$.
 \begin{itemize}
 \item If $v$ rejects the entire $\psi$, repeat Step 3 with the same budget,
 \item Otherwise, if $v$ accepts some coupon $c\in \psi$, deduct $c$ from the current budget and repeat Step 3.
 \end{itemize}
 \item If $(v,\psi)$ is discarded, repeat Step 3 with the same budget.
\end{itemize}
\emph{Step 4:} We add all survived actions to $T_1$.

Let $\mathbf{y}^+$ be the optimal solution to \textbf{P.B-r}, we next prove that \verb"ALG I" is feasible and its expected cascade $\mathbb{E}[f(T_1)]$ is close to $f^+(\mathbf{y}^+)$.

 \begin{lemma}
 \label{lemma:2}
\verb"ALG I" is feasible and \[\mathbb{E}[f(T_{1})]\geq (1-1/e)(1-\beta)(1-2\beta)\beta  f^+(\mathbf{y}^+)\]
 \end{lemma}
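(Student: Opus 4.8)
The plan is to establish the two assertions separately—first feasibility, then the approximation bound—and to track where each of the four factors $(1-1/e)$, $\beta$, $(1-\beta)$, and $(1-2\beta)$ enters. For feasibility I would verify the two constraints of \textbf{P.B} directly from the construction. The restricted outer constraint holds because the contention resolution of Step 2 guarantees $|\mathcal{Z}\cap \mathcal{S}^v|\leq 1$, so each user is probed with at most one sequence $\psi$, and each $\psi\in\Psi$ has length at most $K$; hence no user is probed more than $K$ times, and only in consecutive rounds. The inner knapsack constraint is exactly what the $B/2$ survival threshold in Step 3 enforces: an action is probed only when the remaining budget is at least $B/2$, and since \texttt{ALG I} uses only low-value coupons $c\in\mathcal{C}_l$ with $c\leq B/2$, whatever coupon the probed user accepts costs at most $B/2$. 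Thus the consumed budget never exceeds $B$, and the inner constraint holds.

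For the approximation bound I would build a chain from the relaxation optimum $f^+(\mathbf{y}^+)$ down to $\mathbb{E}[f(T_1)]$. First, the scaling step: because $\beta\leq 1/2$, the scaled point $\beta\mathbf{y}^+$ is feasible for the polytope of \textbf{P.C} (it satisfies $(C1.1)$ and $(C3.1)$ trivially, and its expected cost is at most $\beta B$), and concavity of $f^+$ together with $f^+(\mathbf{0})=0$ gives $f^+(\beta\mathbf{y}^+)\geq \beta f^+(\mathbf{y}^+)$, which accounts for the factor $\beta$. Next, the continuous greedy guarantee (Algorithm~\ref{alg:greedy-peak}) yields a fractional $\mathbf{y}^g$ whose value is at least $(1-1/e)$ times the best achievable over the scaled polytope, and since the random set $R$ of Step 1 has marginals $\mathbf{y}^g$, we obtain $\mathbb{E}[f(R)]\geq (1-1/e)\beta f^+(\mathbf{y}^+)$; linking the value of $R$ to the concave extension here is standard for monotone submodular $f$.

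It then remains to bound the loss of the two rounding operations. The contention resolution of Step 2 retains each sampled action with probability at least $1-\beta$ (the guarantee of the scheme applied to the partition constraint $\sum_\psi y_{v\psi}\leq 1$), so by monotonicity and submodularity of $f$, $\mathbb{E}[f(\mathcal{Z})]\geq (1-\beta)\,\mathbb{E}[f(R)]$. For the budget truncation of Step 3, I would argue that every action in $\mathcal{Z}$ is actually probed with probability at least $1-2\beta$: the total expected cost of the actions in $\mathcal{Z}$ is at most $\sum_{(v,\psi)}y^g_{v\psi}b_{v\psi}\leq\beta B$, so when a given action is reached, Markov's inequality bounds the probability that the consumed budget already exceeds $B/2$ by $(\beta B)/(B/2)=2\beta$. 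Crucially, the event that $(v,\psi)$ survives depends only on the realized costs of actions on \emph{other} users—each user contributes at most one action to $\mathcal{Z}$—hence is independent of $v$'s own acceptance realization; this lets me treat survival as an external $(1-2\beta)$-reliable sampling of $\mathcal{Z}$ and conclude $\mathbb{E}[f(T_1)]\geq (1-2\beta)\,\mathbb{E}[f(\mathcal{Z})]$ by the monotone-submodular sampling inequality. Multiplying the four factors gives the claimed bound.

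The main obstacle I anticipate is precisely this last step: survival in Step 3 is determined by the stochastic budget consumed by earlier-processed actions, so survival and value are not a priori independent, and the processing order is itself adaptive. The observations that make it go through are (i) that contention resolution leaves at most one action per user, decoupling an action's survival from its own threshold $\sigma_v$, and (ii) that a Markov bound on the \emph{expected} consumed budget suffices, because the threshold $B/2$ is a constant fraction of $B$ while the scaled expected cost is only $\beta B$. Making the ``independent keeping'' argument rigorous—so that the marginal survival probability $1-2\beta$ can be pulled out of the monotone submodular function despite the correlations induced by the shared budget—is the technical heart of the proof.
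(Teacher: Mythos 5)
Your proposal is correct and follows essentially the same route as the paper's proof: the same feasibility argument (at most one sequence of length $\leq K$ per user via contention resolution, and the $B/2$ survival threshold with low-value coupons for the knapsack), and the same four-factor chain — $\beta$ from the budget scaling in \textbf{P.C}, $(1-1/e)$ from continuous greedy, $(1-\beta)$ from contention resolution, and $(1-2\beta)$ from Markov's inequality applied to the expected cost $\leq \beta B$, conditioning on the coupons redeemed by users other than $v$. Your explicit treatment of the independence of an action's survival from its own user's realization is exactly the role played by the event $e_1$ in the paper's argument, just spelled out in more detail.
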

\begin{proof}We first prove the feasibility of \verb"ALG I". First of all, we always choose some action from $\mathcal{Z}$ to probe, thus every user receives at most $K$ coupons (outer constraint is satisfied). In Step 3, we probe a user if and only if the current budget is no smaller than $B/2$, the inner constraint is also satisfied since we only consider low-value coupons.

Fix some $(v,\psi)\in \mathcal{S}$. After Step 2, $(v,\psi)$ survives with probability $(1-\beta){y}^g_{v\psi}$. Now we consider the survival rate of $(v,\psi)$ after the Step 3. Let $e_1$ denote the event that the total value of the coupons redeemed by $\mathcal{V}\setminus \{v\}$  is no larger than $B/2$. Since every coupon in $\psi$ has value no larger than $B/2$, $(v, \psi)$ will be probed if $e_1$ happens. Because the expected cost of our solution is bounded by $\sum_{(v,\psi)\in \mathcal{S}} y_{v\psi}b_{v\psi}\leq \beta B$, then according to Markov's inequality, the probability of $e_1$ is bounded by $\Pr[e_1] \geq 1-2\beta$. It follows that  \begin{align*}
\mathbb{E}[f(T_{1})]&\geq (1-\beta)(1-2\beta) f(\mathbf{y}^g) \\
&\geq (1-1/e)(1-\beta)(1-2\beta)\beta  f^+(\mathbf{y}^+)
 \end{align*}
 The first inequality is proved in \cite{vondrak2011submodular} and the second inequality is due to  $f(\mathbf{y}^g)\geq (1-1/e)\beta f^+(\mathbf{y}^+)$ which is proved in \cite{calinescu2011maximizing}.
\end{proof}

\subsection{ALG II} In \verb"ALG II", we only use the largest coupon $c_{\max}=\arg\max_{c\in \mathcal{C}} c$. Given that we can only use one type of coupon, the coupon probing problem is reduced to determining the best order of users in which to probe them. To avoid trivial cases, we assume $c_{\max}>B/2$, otherwise \verb"ALG I" already achieves a constant approximation ratio.

The design of \verb"ALG II" follows a simple greedy manner. We sort all users in non-increasing order of $I(\{v\})$, then probe them one by one using $c_{\max}$. This process terminates when the current user accepts $c_{\max}$ or the last user rejects $c_{\max}$. Let $\mathcal{U}_v$ denote all users placed before $v$, we have 
\begin{lemma}
\label{lem:qian}
\verb"ALG II" is the optimal policy if we are only allowed to use the largest coupon $c_{\max}$, and the expected utility of \verb"ALG II" is $\sum_{v\in \mathcal{V}}(\prod_{u: u \in \mathcal{U}_v} (1-p_{u, c_{\max}} )p_{v, c_{\max}} I(v))$.
\end{lemma}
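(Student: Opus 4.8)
The plan is to exploit the special structure that arises when only $c_{\max}$ is available. First I would observe that, since $c_{\max}>B/2$, the budget $B$ can pay for \emph{at most one} copy of $c_{\max}$: the instant some user accepts we have spent $c_{\max}$ and the remaining budget $B-c_{\max}<c_{\max}$ can no longer fund another offer. Hence any feasible policy produces at most one seed, and the process necessarily halts at the first acceptance. Moreover, because each threshold $\sigma_v$ is fixed once and for all, re-probing a user who already rejected $c_{\max}$ yields the identical (rejecting) outcome and merely wastes an outer-constraint slot; so without loss of generality every user is offered $c_{\max}$ at most once. This reduces the problem to choosing the \emph{order} in which users are offered $c_{\max}$.

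Next I would argue that adaptivity buys nothing, so that the optimal policy is a fixed permutation of the users. The only event that lets a policy continue is a rejection, and the rejections observed so far are realizations of thresholds that are mutually independent across users. Consequently the posterior on any not-yet-probed user is unaffected by the history of rejections, and the choice of the next user cannot be improved by conditioning on that history. Formally, I would prune the policy's decision tree: every internal node is reached only through rejections, and since those rejections are uninformative the subtree rooted at each node can be replaced by its best fixed continuation. This collapses any adaptive policy to a static ordering $v_1,v_2,\ldots$ of $\mathcal{V}$.

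For the formula I would then compute the expected utility of an arbitrary fixed order. A user $v$ becomes the (unique) seed exactly when every user placed before it, i.e.\ every $u\in\mathcal{U}_v$, rejects $c_{\max}$ and $v$ itself accepts; by independence this has probability $\prod_{u\in\mathcal{U}_v}(1-p_{u,c_{\max}})\,p_{v,c_{\max}}$, and in that event the single-seed influence is $I(v)$. Summing over $v$ gives exactly the claimed expression, which establishes the formula for the greedy order.

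Finally, for optimality I would use an adjacent-transposition exchange argument on the static ordering. Fix an order and let $u$ immediately precede $w$; write $P$ for the product of $(1-p_{x,c_{\max}})$ over the common prefix preceding $u$. The contributions of the shared prefix and of the entire suffix after $\{u,w\}$ are identical in the two orders (the suffix is reached with the same probability $P(1-p_{u,c_{\max}})(1-p_{w,c_{\max}})$ in both), so only the two middle terms differ. A direct expansion shows that the objective of the order ``$u$ before $w$'' minus that of ``$w$ before $u$'' equals $P\,p_{u,c_{\max}}\,p_{w,c_{\max}}\,(I(u)-I(w))$, which is nonnegative precisely when $I(u)\ge I(w)$. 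Hence swapping any inverted adjacent pair toward non-increasing $I(\cdot)$ order never decreases the utility, and sorting the whole sequence this way proves the greedy order is optimal. I expect the single genuine obstacle to be the adaptivity-elimination step of the second paragraph; once the problem is reduced to a static ordering, both the probability computation and the exchange inequality are routine.
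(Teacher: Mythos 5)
Your proposal is correct, and it is worth noting that the paper never actually proves Lemma \ref{lem:qian}: the lemma is asserted immediately after the description of \verb"ALG II" with no argument attached. The closest the paper comes is a one-sentence exchange argument inside the proof of Theorem \ref{thm:main2} (for \verb"E-ALG II"): if $I(\{v\})<I(\{u\})$ and $v$ is probed first, ``we lose the chance to probe $u$ if $v$ accepts the coupon.'' Your proof supplies everything that sentence leaves implicit, and does so along the same lines: the observation that $c_{\max}>B/2$ together with the inner budget forces at most one acceptance (so the process is a sequence of probes halting at the first acceptance); the elimination of re-probing and of adaptivity; the product formula; and the explicit adjacent-transposition computation showing the difference of objectives is $P\,p_{u,c_{\max}}\,p_{w,c_{\max}}\,(I(\{u\})-I(\{w\}))$, which pins down non-increasing $I(\{\cdot\})$ order as optimal. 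One remark on your second paragraph, which you flag as the ``single genuine obstacle'': for deterministic policies the adaptivity-elimination step is even easier than your posterior-invariance argument, because acceptance terminates the process (the budget is exhausted), so every internal node of the decision tree has exactly one continuing child and the tree is already a path, i.e.\ a static ordering; randomized policies are mixtures of deterministic ones, so they cannot do better. Independence of the thresholds $\sigma_v$ across users is still needed, but only where you use it anyway --- to write the probability of reaching position $i$ as the product $\prod_{u\in\mathcal{U}_v}(1-p_{u,c_{\max}})$ in the utility formula.
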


\subsection{Put It All Together} We randomly choose one algorithm from \verb"ALG I" and \verb"ALG II" with equal probability as our solution. We refer to this algorithm as Stochastic Coupon Probing policy (\textsc{stoch-CP}). The rest of this paper is devoted to proving  Theorem \ref{thm:main1}.
\begin{theorem}
\label{thm:main1}
Given any $0\leq\beta\leq1/2$, \textsc{stoch-CP} achieves \[\frac{(1-1/e)(1-\beta)(1-2\beta)\beta}{2}\] approximation ratio.
\end{theorem}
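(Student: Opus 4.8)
The plan is to prove Theorem \ref{thm:main1} via the knapsack-style ``split and take the better'' argument that motivated the two-algorithm design. Let $\pi^\star$ be an optimal policy and write $\mathrm{OPT}=f(\pi^\star)$, and let $\mathrm{OPT}_h$ denote the expected utility of \verb"ALG II", which by Lemma \ref{lem:qian} is the exact optimum among policies restricted to the single coupon $c_{\max}$. The whole theorem reduces to establishing one structural inequality,
\[
\mathrm{OPT}\;\le\; f^+(\mathbf{y}^+)\;+\;\mathrm{OPT}_h,
\]
after which Lemma \ref{lemma:2} and Lemma \ref{lem:qian} finish the argument almost mechanically.

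To obtain this inequality I would decompose an arbitrary realization of $\pi^\star$. Let $\mathcal{U}^\star$ be the resulting seed set, split as $\mathcal{U}^\star=\mathcal{U}^\star_l\cup\mathcal{U}^\star_h$, where $\mathcal{U}^\star_l$ collects seeds that accepted a coupon of value at most $B/2$ and $\mathcal{U}^\star_h$ those that accepted a coupon of value exceeding $B/2$. Since each redeemed high-value coupon costs strictly more than $B/2$ while the inner constraint caps the total redeemed value at $B$, at most one high-value coupon can ever be redeemed, so $|\mathcal{U}^\star_h|\le 1$ in \emph{every} realization. Because $I$ is monotone submodular with $I(\emptyset)\ge 0$, it is subadditive on disjoint sets, giving $I(\mathcal{U}^\star)\le I(\mathcal{U}^\star_l)+I(\mathcal{U}^\star_h)$; taking expectations yields $\mathrm{OPT}\le \mathbb{E}[I(\mathcal{U}^\star_l)]+\mathbb{E}[I(\mathcal{U}^\star_h)]$. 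For the high-value term, a single seed is involved and higher coupons are weakly more attractive (the rationality assumption $p_{v,c}\le p_{v,c'}$ for $c\le c'$), so the expected influence of one high-value seed is maximized by always offering $c_{\max}$; Lemma \ref{lem:qian} then gives $\mathbb{E}[I(\mathcal{U}^\star_h)]\le \mathrm{OPT}_h$. For the low-value term, I would first invoke the (separately established) fact that restricting to consecutive rounds sacrifices nothing essential, so the low-value seeds may be viewed as produced by a \textbf{P.B}-type policy, and then encode that policy as a feasible point of \textbf{P.B-r} by setting $y_{v\psi}$ to the probability it probes $v$ with sequence $\psi$: constraints $(C1.1)$, $(C3.1)$ follow from feasibility and $(C2.1)$ from $\mathbb{E}[\text{cost}]\le B$, while the concave extension $f^+$ upper-bounds the induced expected utility, giving $\mathbb{E}[I(\mathcal{U}^\star_l)]\le f^+(\mathbf{y}^+)$. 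Combining the two bounds gives the displayed inequality. (In the degenerate case $c_{\max}\le B/2$ there is no high-value part, $\mathrm{OPT}\le f^+(\mathbf{y}^+)$, and \verb"ALG I" alone suffices.)

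It then remains to combine everything. Write $\gamma=(1-1/e)(1-\beta)(1-2\beta)\beta$ and note $\gamma\le 1$ for $0\le\beta\le 1/2$. Since \textsc{stoch-CP} runs \verb"ALG I" and \verb"ALG II" each with probability $1/2$,
\begin{align*}
\mathbb{E}[\textsc{stoch-CP}]
&= \tfrac12\,\mathbb{E}[f(T_1)] + \tfrac12\,\mathrm{OPT}_h\\
&\ge \tfrac12\,\gamma\, f^+(\mathbf{y}^+) + \tfrac12\,\gamma\,\mathrm{OPT}_h\\
&\ge \tfrac{\gamma}{2}\bigl(f^+(\mathbf{y}^+)+\mathrm{OPT}_h\bigr)
\;\ge\; \tfrac{\gamma}{2}\,\mathrm{OPT},
\end{align*}
where the first inequality uses Lemma \ref{lemma:2} for the \verb"ALG I" term, the fact $\mathrm{OPT}_h\ge\gamma\,\mathrm{OPT}_h$ for the \verb"ALG II" term, and the last step invokes the structural inequality.

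I expect the main obstacle to be the low-value bound $\mathbb{E}[I(\mathcal{U}^\star_l)]\le f^+(\mathbf{y}^+)$: it rests on the consecutive-rounds reduction and on verifying that the relaxation \textbf{P.B-r} genuinely upper-bounds the value of an \emph{arbitrary adaptive} low-value policy, which is an adaptivity-gap–style argument rather than a one-line deduction. By contrast, the subadditive split, the counting bound $|\mathcal{U}^\star_h|\le 1$, the optimality of offering $c_{\max}$ for a lone high-value seed, and the concluding arithmetic (once $\gamma\le 1$ is noted) are all routine.
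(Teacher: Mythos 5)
Your high-level plan is the same as the paper's: split the optimal policy's value into a low-value part and an at-most-one-seed high-value part, let \texttt{ALG I} compete with the former through the fractional optimum $f^+(\mathbf{y}^+)$ (Lemma \ref{lemma:2}), let \texttt{ALG II} compete with the latter (Lemma \ref{lem:qian}), and average. Your subadditive split $I(\mathcal{U}^\star)\le I(\mathcal{U}^\star_l)+I(\mathcal{U}^\star_h)$ with $|\mathcal{U}^\star_h|\le 1$ is exactly the paper's identity $f(\pi^*)=f(\pi^\diamond)+E[\Delta_{\mathcal{U}_{\Phi_f}}(v')\mid\pi^*]$ followed by the submodularity bound $\Delta_{\mathcal{U}_{\Phi_f}}(v')\le I(\{v'\})$, and your concluding arithmetic matches the paper's.

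The genuine gap is the step you yourself flag as ``the main obstacle'' and then do not carry out: $\mathbb{E}[I(\mathcal{U}^\star_l)]\le f^+(\mathbf{y}^+)$. This inequality is the technical heart of the paper's proof, and it is not a matter of ``viewing'' the low-value seeds as the output of a \textbf{P.B}-type policy. The obstruction is that $\pi^\star$'s later choices may depend on the observed responses to its \emph{high-value} offers, and a low-value policy is forbidden from making those probes, hence cannot observe those responses; because thresholds correlate a user's responses across coupon values, you cannot just delete the high-value probes from the decision tree. The paper resolves this with a simulation construction: the surrogate policy $\pi^\diamond$ mimics $\pi^*$, and whenever $\pi^*$ would offer $c^*>B/2$ to $v^*$, $\pi^\diamond$ draws a virtual threshold for $v^*$ from the correct conditional distribution (uniform on $(p_{v^*,c},1]$, where $c$ is the largest low-value coupon $v^*$ rejected so far) and feeds the simulated outcome back into the execution, making $\pi^\diamond$ a genuinely feasible low-value policy whose virtual seed set matches $\pi^*$'s in distribution. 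The second missing ingredient is the encoding of an \emph{adaptive} low-value policy as a feasible point of \textbf{P.B-r}: inside Lemma \ref{lem:1} the paper fixes, for each user $v$ and each realization $\phi_{\overline{v}}$ of the other users, the longest coupon sequence $d$ that the policy could offer $v$, and sets $y^*_{v\psi}$ to the probability that this sequence is offered; this single construction supplies both the adaptivity-gap bound and the reduction to consecutive-rounds probing. Note that the ``separately established fact'' about consecutive rounds that you invoke does not exist as a separate result in the paper---it is absorbed into precisely this encoding. Without these two constructions, your structural inequality $\mathrm{OPT}\le f^+(\mathbf{y}^+)+\mathrm{OPT}_h$ is a restatement of what must be proved rather than a proof of it; the remaining pieces of your argument (the counting bound, the invocation of Lemmas \ref{lemma:2} and \ref{lem:qian}, the final averaging with $\gamma\le1$) are correct and match the paper.
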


\begin{proof}
Let $\textbf{P.A}_l$ be a low-value version of \textbf{P.A}, restricting ourselves to using low-value coupons $\mathcal{C}_l$ only.
Let $\pi^*$ be the optimal solution to \textbf{P.A}. We next show by construction that there exists a policy $\pi^\diamond$ that is feasible to $\textbf{P.A}_l$  and its performance is close to $\pi^*$. Given $\pi^*$, we next explain how to build $\pi^\diamond$ on $\pi^*$. Note that our purpose is \textbf{not} to really construct such $\pi^\diamond$ (this turns out to be impossible since $\pi^*$ is not known in advance), instead, we only want to prove the existence of such a policy.

Let $\Phi_t$ be the set of realizations available at round $t$ (starting with $\Phi_0=\emptyset$). Assume $\pi^*$ is known and $\pi^*(\Phi_t)=(v^*, c^*)$, e.g., $\pi^*$  picks $(v^*, c^*)$ as the next action conditioned on $\Phi_t$, the new policy $\pi^\diamond(\Phi_t)$ is defined as follows:
\[\pi^\diamond(\Phi_t) = \begin{cases}
(v^*, c^*) & \mbox{if $c^* \leq B/2$}\\
\emptyset & \mbox{otherwise}
\end{cases}\]

After observing the realization of  $\phi(\pi^\diamond(\Phi_t))$, we update the realizations from $\Phi_{t}$  to $\Phi_{t+1}$ as follows:
\begin{itemize}
\item \underline{Case A:} If $c^* \leq B/2$, add $\phi((v^*, c^*))$ to  $\Phi_{t}$: $\Phi_{t+1}=\Phi_{t}\cup \phi((v^*, c^*))$.
 \item \underline{Case B:} If $c^* > B/2$, then $\pi^\diamond$ is not allowed to use $c^*$, thus it is impossible to observe the value of $\phi((v^*, c^*))$. To overcome this difficulty, we guess the value of $\phi((v^*, c^*))$ using simulation as follows:

   Assume $c<1/2$ is the largest low-value coupon offered to $v^*$ by $\pi^\diamond$ so far, randomly select a threshold $\sigma_v \in (p_{v^*, c}, 1]$. This value is fixed once it has been selected and it will be used as $v^*$'s threshold in the subsequent rounds.
 \begin{itemize}
  \item If $p_{v^*, c^*}\geq \sigma_v $, add $\phi((v^*, c^*))=1$ to $\Phi_{t+1}$ virtually
  \item If $p_{v^*, c^*}< \sigma_v $, we add $\phi((v^*, c^*))=0$ to $\Phi_{t+1}$ virtually.
   \end{itemize}
 \end{itemize}
We use $\Phi_f$ to denote the realizations obtained at the final round. We define $\mathcal{U}_{\Phi_f}$ to be the seed set under $\Phi_f$. Note that $\pi^\diamond$ mimics the execution of $\pi^*$, thus
\begin{equation}E[I(\mathcal{U}_{\Phi_f})|\pi^\diamond]=f(\pi^*)
\label{eq:1}
 \end{equation}
 However, $I(\mathcal{U}_{\Phi_f})$ is not the actual set of seeds obtained by $\pi^\diamond$, this is because under Case B, we may \emph{virtually} add some $\phi((v',c'))=1$ with $c' > B/2$ to $\Phi_f$. Fortunately, we add at most one such action to $\Phi_f$, otherwise it violates the budget constraint. It follows that the actual seed set obtained by $\pi^\diamond$ is $\mathcal{U}_{\Phi_f}\setminus \{v'\}$ if such $v'$ exists, thus the expected utility of $\pi^\diamond$ is $f(\pi^\diamond)=E[I(\mathcal{U}_{\Phi_f}\setminus \{v'\})|\pi^*]$. Then we have
\begin{equation}
\label{eq:2}f(\pi^\diamond)=E[I(\mathcal{U}_{\Phi_f})|\pi^*]-E[ \Delta_{\mathcal{U}_{\Phi_f}}(v')|\pi^*],
\end{equation}
 where $\Delta_{\mathcal{U}_{\Phi_f}}(v')=\mathcal{U}_{\Phi_f}-\mathcal{U}_{\Phi_f}\setminus\{v'\}$ is marginal utility of $v'$ given existing seed set $\mathcal{U}_{\Phi_f}\setminus\{v'\}$.

Equations (\ref{eq:1}) and (\ref{eq:2}) imply that
\begin{equation}
f(\pi^*)=f(\pi^\diamond)+E[ \Delta_{\mathcal{U}_{\Phi_f}}(v')|\pi^*]
 \end{equation}

To prove Theorem \ref{thm:main1}, it is sufficient to show that $E[f(T_1)]\geq (1-1/e)(1-\beta)(1-2\beta)\beta f(\pi^\diamond)$ (Lemma \ref{lem:1}) and $E[f(T_2)]\geq E[ \Delta_{\mathcal{U}_{\Phi_f}}(v')|\pi^*]$ (Lemma \ref{lem:2}).

\begin{lemma} \label{lem:1} $E[f(T_1)]\geq (1-1/e)(1-\beta)(1-2\beta)\beta  f(\pi^\diamond)$.
\end{lemma}
\begin{proof}
Let $\pi^*_l$ be the optimal solution to $\textbf{P.A}_l$.
Fix some $v\in \mathcal{V}$. Consider a realization $\phi_{\overline{v}}$ that involves $\mathcal{V}\setminus \{v\}$, we define $d$ as the longest possible sequence of coupons offered to $v$ by $\pi^*$ (not necessarily to be in consecutive rounds) given  $\phi_{\overline{v}}$, e.g., this can only happen when the entire $d$ is rejected by $v$. It is important to note that when  $\pi^*$ and  $\phi_{\overline{v}}$ are given, $d$ is also fixed, and we call  $d$ the probing sequence picked by $\pi^*_l$ under $\phi_{\overline{v}}$. Let $\Phi^d_{\overline{v}}$ denote the set of all realizations that involves $\mathcal{V}\setminus \{v\}$ under which  $d$ is offered to $v$ by $\pi^*$. For every $[v\psi]\in \mathcal{S}$, we define $y^*_{v\psi}$ as the probability that $d$ is offered to $v$ by $\pi^*_l$, e.g., $y^*_{v\psi}=\sum_{\phi_{\overline{v}}\in \Phi^d_{\overline{v}}} \Pr[\phi_{\overline{v}}]$. We first prove that $\mathbf{y}^*_l$ is a feasible solution to \textbf{P.B}.
Because $\pi^*_l$ is a feasible policy, the following conditions are satisfied in each round
\begin{enumerate}
\item Every user is probed using at most $K$ coupons.

\item The total value of redeemed coupons under  $\pi^*_l$ is no larger than $B$. Because $\sum_{[v\psi]\in \mathcal{S}} y^*_{v\psi}b_{v\psi}$ is the expected cost of $\pi^*_l$, we have $\sum_{[v\psi]\in \mathcal{S}} y^*_{v\psi}b_{v\psi} \leq B$.
\end{enumerate}
The above two properties ensure that $\mathbf{y}^*_l$ is a feasible solution to \textbf{P.B-r}. Since  $\mathbf{y}^+$ is the optimal solution to \textbf{P.B-r}, we have $f^+(\mathbf{y}^+)\geq f^+(\mathbf{y}^*_l) \geq f(\pi^*_l)$. Because $\pi^*_l$ is the optimal solution to $\textbf{P.A}_l$, we have $f(\pi^*_l)\geq f(\pi^\diamond)$. It follows that $f^+(\mathbf{y}^+)\geq f(\pi^\diamond)$. Then we have
\begin{align*}
\mathbb{E}[f(T_{1})]&\geq (1-1/e)(1-2\beta)\beta  f^+(\mathbf{y}^+)\\
&\geq (1-1/e)(1-2\beta)\beta  f(\pi^\diamond)
\end{align*}
The first inequality is due to  Lemma \ref{lemma:2} and the second inequality is due to $f^+(\mathbf{y}^+)\geq f(\pi^\diamond)$.
\end{proof}

\begin{lemma}
\label{lem:2}
$E[f(T_2)]\geq E[ \Delta_{\mathcal{U}_{\Phi_f}}(v')|\pi^*]$.
\end{lemma}
\begin{proof} Because $f$ is submodular, we have $E[ \Delta_{\mathcal{U}_{\Phi_f}}(v')|\pi^*]\leq E[I(\{v'\})|\pi^*]$. According to \verb"ALG II", $T_2$ contains  a single active user which gives the largest expected influence, then we have $E[f(T_2)]\geq E[ \Delta_{\mathcal{U}_{\Phi_f}}(v')|\pi^*]$.
\end{proof}
This finishes the proof of Theorem \ref{thm:main1}.
\end{proof}
\section{Extension: Incorporating Coupon Distribution Budget}
\label{sec:ext}
We now study \textbf{P.A1}, an extension of $\textbf{P.A}$,  by adding one more constraint to $\textbf{P.A}$: the number of users that are probed can not exceed an non-negative number $W$. This constraint captures the fact that the company often has limited budgeted on coupon
producing and distribution, thus we can offer coupons to a limited number of users. We next summarize the updated constraints as follows.
\begin{itemize}
\item \underline{Outer Constraints:} The number of probes involving the same user can not exceed a non-negative number $K$ and the number of users probed can not exceed an non-negative number $W$.
\item \underline{Inner Constraints:} The total value of redeemed coupons can not exceed a non-negative number $B$.
\end{itemize}

\subsection{Extended ALG I} We present \textbf{P.B-r1}, an extended version of \textbf{P.B-r}, as follows. Condition (C4.1) specifies the additional outer constraint.
 \begin{center}
\framebox[0.45\textwidth][c]{
\enspace
\begin{minipage}[t]{0.45\textwidth}
\small
\textbf{P.B-r1:}
\emph{Maximize $f^+(\mathbf{y})$}\\
\textbf{subject to:}
\begin{equation*}
\begin{cases}
\forall v\in \mathcal{V}: \sum_{d\in \Psi} y_{v\psi}\leq1 \quad(C1.1)\\
  \sum_{(v,\psi)\in \mathcal{S}} y_{v\psi}b_{v\psi}\leq B \quad(C2.1)\\
\forall (v,\psi)\in \mathcal{S}: y_{v\psi} \in[0,1] \quad (C3.1)\\
\sum_{v\in \mathcal{V}, \psi \in \Psi} y_{v\psi}\leq W \quad(C4.1)\\
\end{cases}
\end{equation*}
\end{minipage}
}
\end{center}
\vspace{0.1in}

Similar to \verb"ALG I", the basic idea of  \verb"Extended ALG I" (\verb"E-ALG I") is still to use the continuous greedy algorithm to obtain a fractional solution first, then round it to an integer solution. The continuous greedy algorithm is basically the same as Algorithm \ref{alg:greedy-peak}, except in that \textbf{P.C} is replaced by \textbf{P.C1}.

\begin{center}
\framebox[0.42\textwidth][c]{
\enspace
\begin{minipage}[t]{0.42\textwidth}
\small
\textbf{P.C1:}
\emph{Maximize $\sum_{[v\psi]\in \mathcal{S}}\omega_{v\psi}y_{v\psi}$ }\\
\textbf{subject to:}\begin{equation*}
\begin{cases}
(C1.1) \mbox{ and } (C3.1) \\
  \sum_{(v,\psi)\in \mathcal{S}} y_{v\psi}b_{v\psi}\leq \beta B\\
  \sum_{v\in \mathcal{V}, \psi\in \Psi} y_{v\psi}\leq \beta W
\end{cases}
\end{equation*}
\end{minipage}
}
\end{center}

Let $\mathbf{y}^g$ be the solution returned from the continuous greedy algorithm, we next round $\mathbf{y}^g$ to an integer solution. Our rounding approach is the same as the one used in \verb"ALG I" except in that Step 2.0 is replaced by the following procedure.

\vspace{0.2in}
\emph{Modified Step 2:} Apply Contention resolution \cite{vondrak2011submodular} for two matroids to $R$ to obtain a new set $\mathcal{Z}$ such that $|\mathcal{Z}\cap \mathcal{S}^v|\leq1$ and $|\mathcal{Z}|\leq W$.
\vspace{0.2in}

Let $T_2$ be the set of survived actions, we next prove that \verb"E-ALG I" is feasible and its expected cascade $\mathbb{E}[f(T_2)]$ is close to the optimal solution.

 \begin{lemma}
 \label{lemma:3}
\verb"E-ALG I" is feasible and \[\mathbb{E}[f(T_{2})]\geq (1-1/e)(1-\beta)^2(1-2\beta)\beta f^+(\mathbf{y}^+)\]
 \end{lemma}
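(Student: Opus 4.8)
The plan is to mirror the proof of Lemma~\ref{lemma:2}, tracking the single extra ingredient introduced by the additional outer constraint (C4.1). The argument splits into a feasibility check and an approximation-quality bound, and the only genuinely new work lies in handling two matroid constraints at once during the contention-resolution step.

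First I would verify feasibility of \verb"E-ALG I" by checking all three constraints. The first outer constraint (each user probed by at most $K$ coupons) holds because every survived action comes from $\mathcal{Z}$ and Modified Step~2 guarantees $|\mathcal{Z}\cap \mathcal{S}^v|\leq 1$, so each user is offered a single sequence $\psi$ of length at most $K$. The second outer constraint (at most $W$ users probed) holds because Modified Step~2 also guarantees $|\mathcal{Z}|\leq W$ and the survived set is a subset of $\mathcal{Z}$. The inner constraint holds exactly as in Lemma~\ref{lemma:2}: Step~3 probes a user only when the remaining budget is at least $B/2$, and since \verb"E-ALG I" uses only low-value coupons (each of value at most $B/2$), the total redeemed value never exceeds $B$.

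Next I would bound the per-action survival probability. The key observation is that (C1.1) is a partition matroid constraint and (C4.1) is a uniform matroid constraint, so $\mathcal{Z}$ is obtained by applying contention resolution for the intersection of two matroids. Invoking the scheme of~\cite{vondrak2011submodular}, each matroid is cleared at the cost of a $(1-\beta)$ factor, and because the schemes compose across the intersection, a fixed action $(v,\psi)$ survives Modified Step~2 with probability $(1-\beta)^2 y^g_{v\psi}$, in contrast to the $(1-\beta)y^g_{v\psi}$ of the single-matroid case. The subsequent Markov argument is unchanged: letting $e_1$ be the event that the total value redeemed by $\mathcal{V}\setminus\{v\}$ is at most $B/2$, the budget bound $\sum_{(v,\psi)\in\mathcal{S}} y^g_{v\psi} b_{v\psi}\leq \beta B$ enforced by \textbf{P.C1}, together with Markov's inequality, yields $\Pr[e_1]\geq 1-2\beta$, and on $e_1$ the action $(v,\psi)$ is actually probed because each of its coupons has value at most $B/2$.

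Finally I would chain these estimates together. Combining the $(1-\beta)^2$ survival factor with the $(1-2\beta)$ factor from the budget event gives $\mathbb{E}[f(T_2)]\geq (1-\beta)^2(1-2\beta) f(\mathbf{y}^g)$, and substituting the continuous-greedy guarantee $f(\mathbf{y}^g)\geq(1-1/e)\beta f^+(\mathbf{y}^+)$ from~\cite{calinescu2011maximizing} delivers the claimed bound. The main obstacle is the middle step: one must justify that the two-matroid contention-resolution scheme simultaneously enforces $|\mathcal{Z}\cap\mathcal{S}^v|\leq 1$ and $|\mathcal{Z}|\leq W$ while losing only the product factor $(1-\beta)^2$ rather than something worse arising from correlation between the two rounding constraints. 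This rests on the monotone CR schemes of~\cite{vondrak2011submodular} composing across an intersection of matroids, so that clearing each constraint costs an independent $(1-\beta)$ factor and their product lower-bounds the joint survival probability.
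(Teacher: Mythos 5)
Your proposal is correct and follows essentially the same route as the paper's own proof: feasibility from $|\mathcal{Z}\cap\mathcal{S}^v|\leq 1$, $|\mathcal{Z}|\leq W$, and the $B/2$ budget check; a $(1-\beta)^2$ survival factor from contention resolution over the two matroid constraints (the paper phrases this as intersecting independent roundings, citing the same source); the $(1-2\beta)$ Markov bound as in Lemma~\ref{lemma:2}; and the $(1-1/e)\beta$ continuous-greedy guarantee. No substantive differences.
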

\begin{proof}We first prove the feasibility of \verb"E-ALG I". First of all, because $|\mathcal{Z}\cap \mathcal{S}^v|\leq1$ and $|\mathcal{Z}|\leq W$, $\mathcal{Z}$ satisfies the outer constraints. In Step 3, we probe a group of users subject to budget constraint, thus inner constraint is also satisfied.

Fix some $(v,\psi)\in \mathcal{V}\times \Psi$. After Step 2, $(v,\psi)$ survives with probability $(1-\beta)^2 y^g_{v\psi}$. Similar to Lemma \ref{lemma:2}, we can prove that  the survival rate of $(v,\psi)$ after Step 3 is $1-2\beta$. 
Moreover, the above rounding process is equivalent to performing three independent rounding subject to two outer constraints and one inner constraints respectively, and returning the intersection of the three sets. Then according to \citep{vondrak2011submodular}, we have $\mathbb{E}[f(T_{1})]\geq (1-\beta)^2(1-2\beta)\beta f(\mathbf{y}^g)$. It follows that  $\mathbb{E}[f(T_{1})]\geq (1-1/e)(1-\beta)^2(1-2\beta)\beta  f^+(\mathbf{y}^+)$.
\end{proof}

\subsection{Extended ALG II}
Similar to ALG II, we only use $c_{\max}$ in  \verb"Extended ALG II" (\verb"E-ALG II").   However, instead of using a simple greedy algorithm  to find the optimal solution, we turn to dynamic programming to handle the additional constraint  (C4.1). Notice that since we can only probe $W$ users, our problem becomes a joint user selection and sequencing problem, e.g., select a set of $W$ users and decide in what order to probe them using $c_{\max}$.

Assume all users are sorted in non-decreasing order of $I(\{v\})$. In the following recurrence, $f[v_i,l]$ stores the optimum expected utility that can be obtained from users $v_i, \cdots , v_n$ subject to the constraint that we can only probe $l$ users. Essentially, we need to decide whether to probe $v_i$ or not in each step. The optimal value gained from adding $v_i$ to the solution can be calculated as \[I(\{v_i\})  p_{v_i,c_{\max}} + (1-p_{v_i,c_{\max}}) f[i-1,l-1]\] Otherwise, the optimal value is $f[i-1, l]$. Thus we adopt the following recurrence to obtain a sequence of users $f[n,W]$:
\[f[i,l] = \max \{ I(\{v_i\}) p_{v_i,c_{\max}} + (1-p_{v_i,c_{\max}})) f[i-1,t-1], f[a_{i-1}, l]\} \]
Then we probe $f[v_n, W]$ one by one in non-increasing order of $I(\{v\})$ until some user accepts $c_{\max}$ or none of them accepts $c_{\max}$. Let $T_2$ denote the set of seeds returned from \verb"E-ALG II".
\subsection{Put It All Together} We randomly choose one algorithm from \verb"E-ALG I" and \verb"E-ALG II" with equal probability as our solution. We refer to this algorithm as Extended Stochastic Coupon Probing policy (\textsc{E-stoch-CP}). The rest of this paper is devoted to proving  Theorem \ref{thm:main2}.
\begin{theorem}
\label{thm:main2}
Given any $0\leq\beta\leq1/2$, \textsc{E-stoch-CP} achieves \[\frac{(1-1/e)(1-\beta)^2(1-2\beta)\beta}{2}\] approximation ratio.
\end{theorem}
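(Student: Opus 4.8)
The plan is to follow the same two-part decomposition used in the proof of Theorem~\ref{thm:main1}, now accounting for the extra outer constraint (C4.1). First I would define $\textbf{P.A1}_l$, the low-value version of $\textbf{P.A1}$, and let $\pi^*$ be the optimal policy for $\textbf{P.A1}$. I would then reuse verbatim the construction of the auxiliary policy $\pi^\diamond$ from Theorem~\ref{thm:main1}: $\pi^\diamond$ mimics $\pi^*$ whenever $\pi^*$ plays a coupon of value at most $B/2$, and virtually simulates the outcome (fixing a random threshold) whenever $\pi^*$ plays a high-value coupon. The only new thing to check here is that $\pi^\diamond$ remains feasible for $\textbf{P.A1}_l$ under the added budget $W$: since the set of users actually probed by $\pi^\diamond$ is a subset of those probed by $\pi^*$, and $\pi^*$ probes at most $W$ users, $\pi^\diamond$ also probes at most $W$ users. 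The same accounting as before then yields the identity $f(\pi^*)=f(\pi^\diamond)+E[\Delta_{\mathcal{U}_{\Phi_f}}(v')\mid\pi^*]$, where $v'$ is the at-most-one high-value seed that $\pi^\diamond$ cannot realize.

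Next I would establish the two inequalities that, together with the identity above and the equal-probability randomization between \verb"E-ALG I" and \verb"E-ALG II", give the claimed ratio. For the first, I would show $E[f(T_1)]\geq (1-1/e)(1-\beta)^2(1-2\beta)\beta\, f(\pi^\diamond)$ in analogy with Lemma~\ref{lem:1}. Let $\pi^*_l$ be optimal for $\textbf{P.A1}_l$ and define the fractional point $\mathbf{y}^*_l$ exactly as in Lemma~\ref{lem:1}, where $y^*_{v\psi}$ is the probability that $\pi^*_l$ offers the sequence $\psi$ to $v$. The crucial new step is to verify that $\mathbf{y}^*_l$ satisfies the additional constraint (C4.1): because $\sum_{v,\psi}y^*_{v\psi}$ equals the expected number of users probed by $\pi^*_l$, and $\pi^*_l$ probes at most $W$ users by feasibility, we get $\sum_{v,\psi}y^*_{v\psi}\leq W$. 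Hence $\mathbf{y}^*_l$ is feasible for \textbf{P.B-r1}, so $f^+(\mathbf{y}^+)\geq f^+(\mathbf{y}^*_l)\geq f(\pi^*_l)\geq f(\pi^\diamond)$, and combining with Lemma~\ref{lemma:3} delivers the bound.

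For the second inequality I would prove $E[f(T_2)]\geq E[\Delta_{\mathcal{U}_{\Phi_f}}(v')\mid\pi^*]$ in analogy with Lemma~\ref{lem:2}. Submodularity gives $E[\Delta_{\mathcal{U}_{\Phi_f}}(v')\mid\pi^*]\leq E[I(\{v'\})\mid\pi^*]$, and since $v'$ becomes a seed through a single coupon of value at most $c_{\max}$, this is at most $p_{v',c_{\max}}I(\{v'\})$. Because \verb"E-ALG II" computes, via the dynamic program, the optimal single-coupon policy subject to $W\geq 1$, it is in particular at least as good as the feasible policy that probes only $v'$ with $c_{\max}$, whose expected utility is exactly $p_{v',c_{\max}}I(\{v'\})$; hence $E[f(T_2)]\geq p_{v',c_{\max}}I(\{v'\})\geq E[\Delta_{\mathcal{U}_{\Phi_f}}(v')\mid\pi^*]$. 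Putting the two inequalities together and using $(1-1/e)(1-\beta)^2(1-2\beta)\beta\leq 1$, the expected utility of \textsc{E-stoch-CP} is $\tfrac12\big(E[f(T_1)]+E[f(T_2)]\big)\geq \tfrac12(1-1/e)(1-\beta)^2(1-2\beta)\beta\, f(\pi^*)$, which is the claimed ratio.

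I expect the only genuinely new obstacle to be the feasibility check for the $W$-constraint, both that $\pi^\diamond$ still probes at most $W$ users and that the induced fractional point $\mathbf{y}^*_l$ satisfies (C4.1); everything else is a transcription of Theorem~\ref{thm:main1} with Lemma~\ref{lemma:2} replaced by Lemma~\ref{lemma:3} and the greedy \verb"ALG II" replaced by the dynamic-programming \verb"E-ALG II". A secondary subtlety worth flagging is the notation clash in the extension, where both the survived set of \verb"E-ALG I" and the seed set of \verb"E-ALG II" are written $T_2$; I would rename them $T_1$ and $T_2$ respectively so that the decomposition $f(\pi^*)=f(\pi^\diamond)+E[\Delta_{\mathcal{U}_{\Phi_f}}(v')\mid\pi^*]$ stays aligned with the two algorithms.
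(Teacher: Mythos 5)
Your handling of the decomposition and of the \texttt{E-ALG I} half is correct and matches the paper: the paper omits the proof of its Lemma~\ref{lem:4} as ``analogous,'' and the two feasibility checks you make explicit (that $\pi^\diamond$ probes a subset of the users probed by $\pi^*$, hence at most $W$; and that $\sum_{v,\psi}y^*_{v\psi}$ is the expected number of probed users, hence satisfies (C4.1)) are exactly the right new content there. The notation clash you flag ($T_2$ used for both algorithms' outputs) is indeed in the paper.

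The gap is in the \texttt{E-ALG II} half. The step ``since $v'$ becomes a seed through a single coupon of value at most $c_{\max}$, $E[I(\{v'\})\mid\pi^*]$ is at most $p_{v',c_{\max}}I(\{v'\})$'' treats $v'$ as one fixed user, but $v'$ is a random variable: a feasible $\pi^*$ may adaptively probe a whole sequence of users with high-value coupons, stopping at the first acceptance, so $E[I(\{v'\})\mid\pi^*]=\sum_{v}\Pr[v'=v]\,I(\{v\})$ aggregates contributions from many users. Concretely, take two users with $I(\{u_1\})=I(\{u_2\})=1$ and $p_{u_i,c_{\max}}=1/2$, and let $\pi^*$ probe $u_1$ with $c_{\max}$ and then $u_2$ if rejected: then $E[I(\{v'\})\mid\pi^*]=3/4$, while $\max_v p_{v,c_{\max}}I(\{v\})=1/2$, so no ``probe only $v'$'' policy certifies your inequality and the chain collapses. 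This is why the paper does not compare against a single-user policy: it upper-bounds $E[\Delta_{\mathcal{U}_{\Phi_f}}(v')\mid\pi^*]$ by $f(\pi^h)$, where $\pi^h$ is the \emph{optimal policy restricted to high-value coupons} (also arguing that conditioning the thresholds $\sigma_v$ on rejected low-value coupons only decreases the quantity, so the unconditioned $f(\pi^h)$ remains an upper bound); it then shows by an exchange argument that an optimal $\pi^h$ probes users in non-increasing order of $I(\{v\})$, and that its expected utility is increasing in each acceptance probability $p_{z,c^h_z}$, so replacing every high-value coupon by $c_{\max}$ cannot hurt; finally, since the dynamic program of \texttt{E-ALG II} is optimal among $c_{\max}$-only policies probing at most $W$ users, $E[f(T_2)]\ge f(\pi^h)\ge E[\Delta_{\mathcal{U}_{\Phi_f}}(v')\mid\pi^*]$. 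Note also that the $W$-constraint is exactly why the comparison class must consist of multi-user sequencing policies (hence the DP) rather than the single best user: the high-value trace of $\pi^*$ can involve up to $W$ users, and your bound must dominate all of them jointly.
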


\begin{proof} Let $\textbf{E-P.A}_l$ be a low-value version of \textbf{E-P.A}, restricting ourselves to using low-value coupons $\mathcal{C}_l$ only.
Let $\pi^*$ be the optimal solution to \textbf{E-P.A}.
Similar to the proof of Theorem \ref{thm:main1}, we can show that there exists a  policy $\pi^\diamond$ that is feasible to $\textbf{E-P.A}_l$ and
\begin{equation}
f(\pi^*)=f(\pi^\diamond)+E[ \Delta_{\mathcal{U}_{\Phi_f}}(v')|\pi^*]
 \end{equation}
 To prove Theorem \ref{thm:main2}, it is sufficient to prove the following two lemmas.
 \begin{lemma} \label{lem:4} $\mathbb{E}[f(T_{1})]\geq (1-1/e)(1-\beta)^2(1-2\beta)\beta  f(\pi^\diamond)$
\end{lemma}
The proof of Lemma \ref{lem:4} is analogous to the proof of Lemma \ref{lem:1}, thus omitted here.
\begin{lemma}
$E[f(T_2)]\geq E[ \Delta_{\mathcal{U}_{\Phi_f}}(v')|\pi^*]$.
\end{lemma}
\begin{proof}Since  $v'$ denotes the user (if any) who accepts some coupon larger than $B/2$ and we can add at most one such user to the optimal solution, one can obtain a valid upper bound of  $E[ \Delta_{\mathcal{U}_{\Phi_f}}(v')|\pi^*]$, e.g., the marginal contribution of $v'$ to the optimal policy, by computing $f(\pi^h)$ where $\pi^h$ is the optimal solution when restricting ourselves to using high-value coupons $\mathcal{C}\setminus \mathcal{C}_l$. It was worth noticing that those users who accept a low-value coupon under $\pi^*$ will not contribute to $E[ \Delta_{\mathcal{U}_{\Phi_f}}(v')|\pi^*]$. However, if a user, say $v$, rejects all low-value coupons offered by $\pi^*$,  we need to update the distribution of $\sigma_v$ as follows: assume $\hat{c}$ is the largest low-value coupon rejected by $v$, then the conditional value of $\sigma_v$  is to be randomly selected from $(p_{v, \hat{c}},1]$. Under either of the previous two cases, the conditional value of $E[ \Delta_{\mathcal{U}_{\Phi_f}}(v')|\pi^*]$ is only decreased. This observation indicates that $f(\pi^h)$ is still a valid upper bound of $E[ \Delta_{\mathcal{U}_{\Phi_f}}(v')|\pi^*]$, conditioned on some low-value coupons have been offered to the users.

 Next we prove that $E[f(T_2)]\geq E(\pi^h)$. We first prove that there exists a $\pi^h$ that probes the same user in consecutive rounds. Consider a relaxed problem where the set of users $\mathcal{W}$ probed by $\pi^h$ is known, it is optimal to probe $\mathcal{W}$ in non-increasing order of $I(\{v\})$. This is because if there exists two users $u,v\in \mathcal{W}: I(\{v\})< I(\{u\})$ and $v$ is probed first, we lose the chance to probe $u$ if $v$ accepts the coupon, thus probing $u$ first leads to a larger expected utility. Therefore, it is optimal to probe the same user in consecutive rounds. It follows that the expected utility of $\pi^h$ is

 \[\mathbb{E}_{\mathcal{W}\sim \pi^h}[\sum_{v\in \mathcal{W}} \prod_{u\in \mathcal{W}_v} (1-p_{u,c^h_u})p_{v,c^h_v}I(\{v\})],\]

where $\mathcal{W}_v$ denotes the set of users that are probed before $v$, e.g., $\forall u\in \mathcal{W}_v: I(\{u\})\leq I(\{v\})$, and $c^h_v$ denotes the largest coupon offered to $v$ by $\pi^h$. Consider a fixed $\mathcal{W}$, we next show that $\sum_{v\in \mathcal{W}} \prod_{u\in \mathcal{U}_v} (1-p_{u,c^h_u})p_{v,c^h_v}I(\{v\})$ is a increasing function of $p_{z, c^h_z}$ for any $z\in \mathcal{W}$ (assuming $p_{u,c^h_u}$ is fixed for all other $u\in \mathcal{U}\setminus \{z\}$). This is because $\sum_{v\in \mathcal{W}} \prod_{u\in \mathcal{W}_v} (1-p_{u,c^h_u})p_{v,c^h_v}I(\{v\})$ can be rewritten as

 {\small \begin{equation*}\hspace{-1in}
 \sum_{v\in \mathcal{W}_z} \prod_{u\in \mathcal{U}_v} (1-p_{u,c^h_u})p_{v,c^h_v}I(\{v\}) \end{equation*}
 \begin{equation} \label{eq:3} + \prod_{u\in \mathcal{U}_z} (1-p_{u,c^h_u}) (p_{v, c^h_z} \theta_1+ (1-p_{z, c^h_z})\theta_2)
  \end{equation}}

  In the above equation, $\theta_1$  is the expected utility conditioned on $z$ is probed and accepts the coupon and $\theta_2$ is the expected utility conditioned on $z$ is probed and rejects the coupon. Since the first term of Eq. (\ref{eq:3}) does not contain $p_{z, c^h_z}$, we next focus on proving that the second term of Eq. (\ref{eq:3}) is a increasing function of $p_{z, c^h_z}$. Because $\theta_1=I(\{z\})$ and all users probed after $z$ has expected utility no larger than $I(\{z\})$, we have $\theta_1\geq\theta_2$. It follows that the second term of Eq. (\ref{eq:3}) is an increasing function of $p_{z, c^h_z}$, thus \[\sum_{v\in \mathcal{W}} \prod_{u\in \mathcal{W}_v} (1-p_{u,c^h_u})p_{v,c^h_v}I(\{v\})\] is an increasing function of $p_{z, c^h_z}$. Therefore, offering $c_{\max}$ to each user will never decrease the expected utility.  Since $f(T_2)$ is the optimal solution when using $c_{\max}$ only, we have $E[f(T_2)]\geq E(\pi^h)$. \end{proof}
  This finishes the proof of Theorem \ref{thm:main2}.
  \end{proof}

\section{Conclusion}
\label{sec:conclusion}
In this work, we study stochastic coupon probing problem in social networks.  Different from existing work on discount allocation, we formulate our problem as a stochastic optimization problem. Our ultimate goal is to develop an efficient probing policy subject to inner and outer constraints. We notice that existing stochastic optimization approaches can not apply to our setting directly. Our main result is a constant approximation policy for the stochastic coupon probing problem, and we believe that this result made fundamental contributions to the field of stochastic optimization. 
\bibliographystyle{named}
\bibliography{reference}

\end{document}